\newtheorem*{theorem*}{Theorem}
\newtheorem{theorem}{Theorem}[section]
\newtheorem*{proposition*}{Proposition}
\newcommand{\ignore}[1]{}
\newcommand{\enote}[1]{} \newcommand{\knote}[1]{}
\newcommand{\rnote}[1]{}
\renewcommand{\P}[1]{{\mathbb{P}}\left[{#1}\right]}
\newcommand{\E}[1]{{\mathbb{E}}\left[{#1}\right]}
\newcommand{\ind}[1]{{\bf 1}_{#1}}
\newcommand{\CondE}[2]{{\mathbb{E}}\left[{#1}\middle\vert{#2}\right]}
\newcommand{\eps}{\epsilon}
 \newcommand{\R}{\mathbb R}
\renewcommand{\phi}{\varphi}
\begin{document}
\title{Bundling Customers: How to Exploit Trust Among Customers to Maximize Seller Profit}
\author{
  Elchanan Mossel\footnote{Weizmann Institute and U.C. Berkeley. E-mail:
    mossel@stat.berkeley.edu. Supported by a Sloan fellowship in
    Mathematics, by BSF grant 2004105, by NSF Career Award (DMS 054829)
    by ONR award N00014-07-1-0506 and by ISF grant 1300/08}~ and
  Omer Tamuz\footnote{Weizmann Institute. Supported by ISF
    grant 1300/08. Omer Tamuz is a recipient of the Google Europe
    Fellowship in Social Computing, and this research is supported in
    part by this Google Fellowship.}
 }

\date{\today}
\maketitle
\begin{abstract}
  We consider an auction of identical digital goods to customers whose
  valuations are drawn independently from known distributions.
  Myerson's classic result identifies the truthful mechanism that
  maximizes the seller's expected profit.

  Under the assumption that in small groups customers can learn each
  others' valuations, we show how Myerson's result can be improved to
  yield a higher payoff to the seller using a mechanism that offers
  groups of customers to buy bundles of items.
\end{abstract}

\section{Introduction}

\subsection{Second version note}
After posting the first version of this paper we learned that much of
its mathematical content already appears in the literature, for
example in an article of
Armstrong~\cite{armstrong1996multiproduct}, although in a slightly
different context of bundling products, rather than customers.

\subsection{Bundling items}
Bundling is the practice of joining together a number of products into
a ``bundle'', so that customers may not buy each product separately,
but must choose to either buy the entire bundle or have non of the
included items. Alternatively, customers may be allowed to purchase a
single item, but at a higher cost; that is, the price of the bundle is
set to below the sum of the prices of the individual items that
comprise it. Examples range from McDonald's happy meals to enormous
defense contracts~\cite{adams1972military} (see also the recent
attention to bundling of scientific journal
subscriptions~\cite{gowersblogElsevier}). Bundling has also received
much attention from theorists (cf.~\cite{adams1976commodity,
  mcafee1989multiproduct, holzman2004bundling} and many more).

However, consider a population of consumers who are potential
customers for some mass produced product (i.e., the number of
available items is unlimited). Assume also that customers generally
have no need for more than one item. For example, the product might be
an upgrade to an operating system, a cellphone data package or removal of 
tax offenses record. This class of products is sometimes referred to as
{\em digital goods}.

Since each customer has no need for more than one item, bundling items
does not seem to offer an advantage to the seller. Indeed,
Myerson~\cite{myerson1981optimal} shows that in a Bayesian setting the
best strategy available to the seller is to offer a fixed per-item
price\footnote{Different fixed prices may be offered to different
  customers in a practice called {\em price differentiation}.}. Since
customer valuations for a product may differ wildly, fixing a price
often means forfeiting the customers who are willing to pay less,
while undercharging the customers who are willing to pay more.

\subsection{Bundling customers}
We consider a different kind of bundling, which, although also
widespread, seems (to our knowledge) to have been largely overlooked
by theorists.  We propose that the seller may increase its profit
beyond Myerson's bound by {\em bundling customers}: here customers are
arbitrarily grouped into pairs and are offered to buy two items for a
price that is lower than the sum of the prices of the individual
items. The same can of course be done for larger groups of customers,
so that a group of $n$ customers are jointly offered to buy $n$ items
for a discount.

Our key assumption is what we call {\em group rationality}: namely
that a bundle of customers will accept the group offer {\em if there
  is a way for them to share the cost so that all of them
  benefit}. For example, consider two customers who are each
interested in buying a copy of a book whose (single item) price is set
to $10$ Gold Dinars. Let customer $X$ be willing to pay at most $20$
Dinars, and let customer $Y$ be willing to pay at most $5$ Dinars. Let
the cost of a single book be set to $10$ Dinars. Group rationality
implies that if $X$ and $Y$ were offered to jointly buy two books for
$11$ Dinars then they would accept and find a way to split the cost,
since both can benefit; customer $X$ can contribute $8$ Dinars and
customer $Y$ can contribute $3$ Dinars, and then $X$ has paid two 
Dinars less than she would have paid on her own, and $Y$ was able to
buy the book, which he wouldn't been able to do on his own. Note that
assuming that the cost of printing a book is small, then the seller is
also strictly better off.

Our {\em group rationality} assumption is novel in the context of
Myerson auctions, and is in fact what allows us to increase the
seller's profit past Myerson's bound on truthful auctions. We note
that indeed there is no truthful mechanism for two customers to agree
on a division of costs when a feasible one exists; this is nothing but
the well known ``splitting the dollar'' game. In the example above, if
customer $Y$ manages to convince $X$ that he is not willing to pay
more than $2$ Dinars, then $X$ might settle for paying $9$ herself,
which still leaves her better off than buying a single book for $10$
Dinars.

However, we argue that it is important to consider group rationality;
it is in fact a phenomenon that, in other contexts, has been widely
studied theoretically and experimentally, and falls under the general
titles of cooperation and altruism (cf.~\cite{axelrod1981evolution,
  nowak2006five, bowles2006group, nowak2005evolution}).

Specifically, families and tribes are often group rational (for
obvious evolutionary reasons, cf.~\cite{kurland1980kin}), as are other
groups of people who expect to have to rely on each other in the
future.  A further argument to support group rationality in our
setting is the observation that when the stakes (i.e., the savings)
are high, one could expect that in any small group people would be
sufficiently incentivized to find a way to compromise, trust and
share, even if there is a danger of being short-changed; in reality,
prisoners do sometimes choose to ``cooperate'' even when facing the
risk of ``defection'' by cellmates, and the tragedy of the commons can
be averted (cf.~\cite{fehr2000cooperation}).

\subsection{Results}
We consider a Bayesian setting with independent customer valuation
distributions and {\em group rational} customers. Our main result is that
under mild smoothness conditions of the customers' valuation
distributions, the seller can expect a strictly higher profit when
bundling customers into pairs, as compared to selling single items.

We also show that when valuations are uniformly bounded then, as the
size of the bundle increases, the seller's expected profit from the
customers approaches the sum of their expected valuations for the
product, which is an upper bound on the seller's profit. This bound is
achieved in single customer auctions only when the customer reveals
its valuation to the seller.

Approaching this limit by bundling ever larger groups of customers
would require ever more trust among them. Note that assuming group
rationality for larger groups is a stronger assumption than group
rationality for smaller groups.  Indeed, as the size of the group
grows, the believability of {\em group rationality} diminishes; all
else being equal, it seems harder to expect honesty and trust among a
hundred people than among a couple.

Our results can therefore be interpreted to show that the seller can
exploit trust {\em among customers}\footnote{Note that the customers
  are not required to trust the seller!} to increase its profit. And
in fact, the more trusting the customers are (i.e., the larger the
trusting group is), the higher the profit the seller can expect, up to
the maximal profit possible.

\section{Model}
Let $[n] = \{1,\ldots,n\}$ be the set of customers.  Each customer $i$
has a private valuation $V_i$, which is the maximum price that it
would be willing to pay for the product. These valuations are not
known to the seller, who however has some knowledge of what they might
be. We model the seller's uncertainty by assuming that each valuation
$V_i$ is picked independently\footnote{Despite some recent
  progress~\cite{papadimitriou2011optimal}, it seems that Myerson
  auctions are generally difficult to analyze when valuations are not
  independent.  We conjecture that our results hold also for the case
  of correlated valuations.} from some distribution with cumulative
distribution function (CDF) $F_i$. This model is a special case of
Myerson auctions~\cite{myerson1981optimal}.

We make a number of mild smoothness conditions on the distributions of
valuations: We assume that $F_i$ is non-atomic and differentiable with
bounded density (PDF) $f_i$. We assume all valuations are in $[0,M]$
for some $M \in \R$, so that $f_i$ is zero outside this interval for
all $i$. We further assume that for some $\delta > 0$ it holds for all
$i$ that $\delta < f_i < 1/\delta$ in the interval
$[0,M]$\footnote{These assumptions can be significantly relaxed at the
  price of a significantly more technical and difficult to read
  paper.}.

Let $s$ be an auction mechanism or sales strategy. We assume that it
can result in each of the customers either receiving or not receiving
an item, and parting with some sum of money. In the context of $s$, we
denote by $R_i^s$ the event that customer $i$ receives an item. We
denote by $P_i^s$ the price, or the amount of money customer $i$ paid
the seller for the item. We denote customer $i$'s utility by $C_i^s$,
where
\begin{align}
  \label{eq:customer-utility}
  C_i^s = \ind{R_i^s}(V_i-P_i^s).
\end{align}
and denote the customer's expected utility by $c_i^s=\E{C_i^s}$. 

Let $U_i^s$ denote the seller's utility from selling to customer
$i$. We assume that the cost of an item is zero, and so define
\begin{align}
  \label{eq:seller-utility}
  U_i^s = \ind{R_i^s}P_i^s.
\end{align}
We denote the seller's expected utility by $u_i^s=\E{U_i^s}$. We
denote the seller's total expected utility by $u^s=\sum_{i=1}^nu_i^s$.

We assume throughout that given a seller's strategy, the customer will
pick a strategy that will maximize its expected utility. Given that, a
seller will pick a strategy that will maximize its own total expected
utility. We largely ignore the possibility of ties (i.e., two
strategies that result in the same expected utility, for either the
customer or the seller), since, as we assume the distribution of the
valuations is non-atomic, it will be the case for the strategies that
we consider that ties will occur with probability zero.

\subsection{Sales strategies}
\subsubsection{Single customer one time offer}
\label{sec:single-model}

We assume that the seller wishes to maximize the sum of the expected
revenues it extracts from the customers. A possible strategy would be
to give each customer $i$ a one time offer to buy the product at price
$p_i$. Myerson~\cite{myerson1981optimal} shows that this sales
strategy, of all the truthful strategies, maximizes the profit of the
seller, for the appropriate choice of $p_i$.

The customer's utility in this case is
$C_i=(V_i-p_i)\ind{R_i}$. Therefore, assuming the customer wishes to
maximize its utility, it would buy iff $V_i \geq p_i$ (or equivalently
$V_i > p_i$, since $\P{V_i=p_i}=0$). Hence $R_i=\ind{V_i \geq
  p_i}$, the gain by the seller is $P_i=p_i\ind{R_i}$, and the
seller's expected utility is
\begin{align}
  \label{eq:u}
  u_i(p_i) = \E{P_i} = p_i \cdot \P{V_i \geq p_i} + 0 \cdot \P{V_i < p_i}=
  p_i(1-F_i(p_i)),
\end{align}
with
\begin{align}
  \label{eq:dudp}
  u_i'(p_i) = \frac{du_i(p_i)}{dp_i} = 1-F_i(p_i)-p_if_i(p_i).
\end{align}

If we assume that $F_i$ is non-atomic, differentiable and only supported
on $[0,M]$, then $u_i(p_i)$ is continuous and differentiable and must
have a maximum in $[0,M]$. By solving $u_i'(p_i) = 0$ we can show that
any $p_i$ which maximizes $u_i$ satisfies
\begin{align}
  \label{eq:p_i-opt}
  p_i = \frac{1-F_i(p_i)}{f_i(p_i)}.
\end{align}
Furthermore, under these assumptions $u_i(0)=u_i(M)=0$, whereas
clearly $u_i$ is positive for some $0 < p_i < M$. Hence this maximum
does not occur at $0$ or $M$.

\subsubsection{Bundling customers}
We next consider the strategy of bundling the customers. Let $[n] =
\{1,\ldots,n\}$ be a set of customers. The bundling strategy here is
parametrized by a vector of single item prices
$\bar{a}=(a_1,\ldots,a_n)$ and the bundle price $b$.

The $n$ customers are given the option to buy a bundle of $n$ items
(i.e., each gets an item) for the total price of $b$. Additionally,
each customer $i$ may buy a single item for the price of $a_i$.

We assume {\em group rationality}, so that the customers choose to buy
the bundle {\em if the cost can be shared in a way that is profitable
  for all}. That is, the customers buy the bundle if {\em there exist}
$(P_1,\ldots,P_n)$ such that the following holds:
\begin{enumerate}
\item $\sum_iP_i=b$.
\item $P_i \leq V_i$ for all $i \in [n]$. That is, each customer's utility
  for buying the bundle is positive, or better than the utility for
  not buying.
\item $P_i \leq a_i$ for all $i \in [n]$. That is, each customer's
  utility for buying the bundle is better than the utility for buying
  individually.
\end{enumerate}
Hence, we assume that if the cost of the bundle can be shared in a way
that, for each customer, improves the utility over the other
alternatives, then the customers will find a way to share the cost and
will choose to buy the bundle. When this is not the case then each
customer $i$, independently, decides to either buy or to buy,
depending on whether $a_i \leq V_i$, as in the single customer
case. Formally, $R_i=1$ iff the condition above holds or $a_i \leq
V_i$.  Note that ``$\leq$'' can be replaced by ``$<$'' throughout,
since ties occur with probability zero.

Note that when the conditions above apply - i.e., accepting the bundle
is group rational for some prices $\{P_i\}$ - then accepting the offer
and paying $P_i$ is a Nash Equilibrium: it is better for customer $i$
to accept the offer for $P_i$ rather than shop alone, since then it
would have to pay more.

\section{Results}
\subsection{Smoothness and boundedness conditions}
\label{sec:smoothness}
We make the following assumptions on the distribution of customer
valuations $V_i$. Recall that we denote by $F_i$ and $f_i$ the CDF and
PDF of the distribution of $V_i$.
\begin{enumerate}
\item Customers valuations are independent and non-atomic.
\item There exists $M > 0$ such that, for all $i$, $V_i$ is in
  $[0,M]$.
\item The distribution of $V_i$ has a density (PDF) $f_i$.
\item There exists $0 < \delta < 1$ such that, for all $i$, $\delta < f_i(p)
  < 1/\delta$ for $p \in [0,M]$.
\end{enumerate}

\subsection{Theorem statements}
In the statement of the following theorem we mark quantities related to the
single customer strategy by $s$, and quantities related to the
bundling strategy by $b$. E.g., $U_i^s$ is the seller's utility from
customer $i$ using the single customer strategy, and $u_i^b$ is the
seller's expected utility from customer $i$ using the pair bundling
strategy.
\begin{theorem}
  \label{thm:bundling-better}
  Let $\{1,2\}$ be a pair of customers with valuation distributions
  satisfying the smoothness and boundedness conditions
  in~\ref{sec:smoothness}.  Let
  \begin{align*}
    u^s(p_1,p_2) = u_1^s(p_1)+u_2^s(p_2)
  \end{align*}
  be the seller's total expected utility when using the single
  customer strategy with prices $p_1$ and $p_2$.  Let
  \begin{align*}
    u^b(a_1,a_2,b)= u_1^b(a_1,a_2,b)+u_2^b(a_1,a_2,b) 
  \end{align*}
  be the seller's total expected utility when using the pair bundling
  strategy with prices $a_1$, $a_2$ and $b$. Then
  \begin{align}
    \label{eq:bundle-good}
    \max_{a_1,a_2,b} u^b(a_1,a_2,b) > \max_{p_1,p_2}u^s(p_1,p_2).
  \end{align}
\end{theorem}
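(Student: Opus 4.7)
The plan is to exhibit a bundling strategy that strictly beats the optimal single-customer prices, and to verify the inequality via a first-order perturbation argument. Let $(p_1^*,p_2^*)$ maximize $u^s$; equation (\ref{eq:p_i-opt}) together with the density bounds $\delta < f_i < 1/\delta$ on $[0,M]$ guarantee that $p_i^* \in (0,M)$ and $p_i^* f_i(p_i^*) = 1 - F_i(p_i^*) > 0$. I take the bundle parameters to be $a_i = p_i^*$ and $b = p_1^* + p_2^* - \epsilon$ for a small $\epsilon > 0$, and aim to show $u^b(a_1,a_2,b) > u^s(p_1^*,p_2^*)$ once $\epsilon$ is small enough.

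To compute the difference, I partition $[0,M]^2$ into four regions according to whether each $V_i$ exceeds $p_i^*$. (A) If $V_1 \geq p_1^*$ and $V_2 \geq p_2^*$, the bundle is group-rational (take $P_i = p_i^* - \epsilon/2$), so the revenue is $b$ instead of $p_1^* + p_2^*$, a loss of $\epsilon$. (B) If $V_1 \geq p_1^*$ and $V_2 < p_2^*$, then $\min(V_1,a_1) = p_1^*$, so the bundle is feasible iff $V_2 \geq p_2^* - \epsilon$, in which case revenue is $b$ versus the single-strategy revenue $p_1^*$, a gain of $p_2^* - \epsilon$; if $V_2 < p_2^* - \epsilon$, customer $1$ still buys alone at $a_1$ and both strategies yield $p_1^*$. (C) The case $V_1 < p_1^*$, $V_2 \geq p_2^*$ is symmetric. (D) If $V_1 < p_1^*$ and $V_2 < p_2^*$, the bundle is feasible iff $V_1 + V_2 \geq p_1^* + p_2^* - \epsilon$, in which case revenue is $b$ versus $0$.

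Let $q_i := 1 - F_i(p_i^*) > 0$. By independence, region (A) contributes exactly $-\epsilon\, q_1 q_2$ to $u^b - u^s$. By the differentiability of $F_i$ at $p_i^*$, region (B) contributes
\[
(p_2^* - \epsilon)\, q_1 \bigl[F_2(p_2^*) - F_2(p_2^* - \epsilon)\bigr] = p_2^* f_2(p_2^*)\, q_1\, \epsilon + o(\epsilon),
\]
which by (\ref{eq:p_i-opt}) equals $q_1 q_2\, \epsilon + o(\epsilon)$; region (C) contributes $q_1 q_2\, \epsilon + o(\epsilon)$ symmetrically. Region (D) is confined to a triangle of area $\epsilon^2/2$, on which the joint density is at most $1/\delta^2$ and the per-event revenue is bounded by $2M$, hence contributes $O(\epsilon^2)$. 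Summing yields $u^b - u^s = q_1 q_2\, \epsilon + o(\epsilon)$, strictly positive for all sufficiently small $\epsilon$.

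The main conceptual point is the cancellation of leading-order terms: the cannibalization loss in region (A) exactly offsets one of the two symmetric gain terms from (B) and (C), and it is precisely the first-order optimality identity $p_i^* f_i(p_i^*) = q_i$ that converts the remaining gain into a strictly positive surplus equal to $q_1 q_2\, \epsilon$. The rest is routine bookkeeping of the $o(\epsilon)$ remainders, handled by the uniform density bounds; the only substantive step beyond this is verifying that region (D)'s contribution is genuinely $O(\epsilon^2)$ rather than competing with the leading terms.
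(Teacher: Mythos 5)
Your proof is correct, but it takes a genuinely different route from the paper's. The paper perturbs in the opposite direction: it keeps the bundle price at the undiscounted sum $p_1^*+p_2^*$ and instead raises customer $1$'s individual price to $p_1^*+\eps$, so that the only first-order effect is a pure gain of $p_2^*$ on the event $\{V_1\ge p_1^*+\eps,\ p_2^*-\eps\le V_2<p_2^*\}$ (probability $\Theta(\eps)$), where the pair now buys the bundle for the full $p_1^*+p_2^*$ instead of customer $1$ buying alone for $p_1^*$; all losses (the perturbation of $a_1$, controlled via optimality of $p_1^*$, and the small square near $(p_1^*,p_2^*)$) are $O(\eps^2)$, so no cancellation of first-order terms is needed. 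Your construction discounts the bundle instead, which introduces a genuine first-order cannibalization loss $\eps\,q_1q_2$ in your region (A) that must be cancelled against the two marginal-customer gains from (B) and (C); the cancellation works precisely because of the Myerson identity $p_i^*f_i(p_i^*)=q_i$, and you correctly verify group-rationality in each region and that (D) is $O(\eps^2)$. What your approach buys is the exact leading-order coefficient, $u^b-u^s=q_1q_2\,\eps+o(\eps)$, together with the arguably more natural conclusion that an actual bundle discount (rather than a raised individual price) already beats the optimal single-item auction; what the paper's approach buys is that its dominant term is a gain on a single region rather than the residue of two first-order quantities cancelling, so it uses the first-order optimality condition only to bound a loss rather than to evaluate the gain. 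Both arguments rest on the same facts established in the paper's Section 2.1 (interiority of $p_i^*$ and the stationarity condition), and both are complete.
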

That is, the best bundling strategy is strictly better than the best
single customer strategy.

The next theorem shows that when valuations are bounded then, as the
size of the bundle grows, the expected utility of the seller from the
customers approaches the sum of their expected valuations.
\begin{theorem}
  \label{thm:more-bundling-better}
  Consider a set of $n$ customers with valuation distributions
  satisfying the smoothness and boundedness conditions
  in~\ref{sec:smoothness}. 

  Let $\mu_i = \E{V_i}$ be customer $i$'s expected valuation, and let
  $\mu = \sum_{i=1}^n\mu_i$ be the sum of the customers' expected
  valuations.  Let
  \begin{align*}
    u^n(\bar{a},b)= \sum_{i=1}^nu_i^b(\bar{a},b) 
  \end{align*}
  be the seller's total expected utility when bundling all $n$
  customers  with prices $\bar{a} = (a_1,\ldots,a_n)$ and $b$.

  Then the seller's total expected utility satisfies
  \begin{align}
    \label{eq:big-bundle-good}
    \max_{\bar{a},b} u^n(\bar{a},b)
    \geq\left(1-\frac{4}{\delta}\sqrt{\frac{\log
          n}{n}}-O\left(\frac{1}{n}\right)\right)\mu
  \end{align}
\end{theorem}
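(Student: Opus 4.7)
The plan is to exhibit a particular bundling strategy and show that its revenue already meets the stated lower bound. Set $a_i = M$ for every $i$. Since $V_i \leq M$ almost surely, no customer ever chooses to buy individually, and the group-rationality condition reduces to the existence of $P_i \in [0, V_i]$ summing to $b$, which is possible precisely when $\sum_{i=1}^n V_i \geq b$. Hence the seller's expected revenue is exactly $b \cdot \P{\sum_{i=1}^n V_i \geq b}$, and the task becomes choosing $b$ close to $\mu$ while keeping the acceptance probability close to one.

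Write $b = \mu - t$ for a threshold $t>0$ to be chosen. The relative revenue loss $1 - u^n/\mu$ is then at most $t/\mu + \P{\sum_i V_i < \mu - t}$. To control the tail I would apply Bernstein's inequality to the independent bounded random variables $V_i - \mu_i$, using the variance bound $\Var{V_i} \leq \E{M V_i} = M\mu_i$ (which holds because $V_i \leq M$). This yields
\begin{align*}
\P{\sum_{i=1}^n V_i < \mu - t} \leq \exp\left(-\frac{t^2}{2M\mu + \tfrac{2}{3}Mt}\right),
\end{align*}
so choosing $t$ of order $\sqrt{M\mu \log n}$ drives this probability below $1/n^2$ and makes its contribution to the relative loss only $O(1/n)$ after multiplying by $\mu$.

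For the leading term $t/\mu$, I would use the smoothness conditions to lower-bound $\mu$. The density bounds $\delta < f_i < 1/\delta$ on $[0,M]$, combined with $\int_0^M f_i = 1$, force $\delta < M < 1/\delta$. Then the tail-sum formula $\mu_i = \int_0^M (1 - F_i(p))\,dp$ together with $F_i(p) \leq p/\delta$ gives $\mu_i \geq \delta/2$, and hence $\mu \geq n\delta/2$. Plugging in yields $t/\mu = O\bigl(\sqrt{M \log n / (n\delta)}\bigr) \leq (4/\delta)\sqrt{\log n / n}$ after using $M \leq 1/\delta$, which is the claimed bound. The main subtlety I expect is the use of Bernstein (variance-sensitive) rather than Hoeffding: Hoeffding would degrade the final dependence from $1/\delta$ to $1/\delta^2$, because it ignores the fact that $\Var{V_i} \leq M\mu_i$ rather than just $M^2$. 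The rest is a routine optimization of the threshold $t$.
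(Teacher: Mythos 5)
Your proposal is correct and follows essentially the same route as the paper: suppress individual sales (the paper takes $a_i=\infty$, you take $a_i=M$, which is equivalent here), price the bundle at $\mu$ minus a deviation term of order $\sqrt{n\log n}$, bound the rejection probability with Bernstein, and lower-bound $\mu$ via the density assumptions. The only difference is bookkeeping: the paper feeds Bernstein the crude bound $\E{(V_i-\mu_i)^2}\le M^2$ and compensates with $\mu \ge nM\delta/2$ (the factors of $M$ cancel), so your variance-sensitive refinement $\Var{V_i}\le M\mu_i$, while valid, is not actually needed to avoid the $1/\delta^2$ loss you describe.
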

Note that since a customer will never pay more than its valuation then
\begin{align*}
  \max_{\bar{a},b} u^n(\bar{a},b) \leq \mu.
\end{align*}

\section{Proofs}
\begin{proof}[Proof of Theorem~\ref{thm:bundling-better}]
  Let $(p_1,p_2)$ be the prices that maximize the seller's total
  expected utility for the single customer strategy. We will prove the
  theorem by showing that there exists $\eps > 0$ such that
  \begin{align}
    \label{eq:eps-helps}
    u^b(p_1+\eps,p_1,p_1+p_2) > u^s(p_1,p_2).
  \end{align}

  Note that since $p_1$ is optimal for the single customer strategy, then
  \begin{align*}
    \frac{\partial u^s(p_1,p_1)}{\partial
      p_1}=\frac{du_1^s(p_1)}{dp_1}=0.
  \end{align*}
  Hence there exist a constant $C_1$ such that for all $\eps$ small
  enough it holds that
  \begin{align}
    \label{eq:u-s-maximum}
    u^s(p_1+\eps,p_2) > u^s(p_1,p_2) - C_1\eps^2.
  \end{align}

  Let $B$ denote the event that the customers buy the bundle. Recall
  that in the bundling strategy with prices $(p_1+\eps,p_2,p_1+p_2)$
  $B$ occurs if and only if there exist $P_1$ and $P_2$ such that
  \begin{align}
    \label{eq:bundle-buy-sum}
    P_1+P_2 &= p_1+p_2,\\
    V_1 &\geq P_1, \nonumber \\
    V_2 &\geq P_2, \nonumber \\
    P_1 &\leq p_1+\eps, \nonumber \\
    P_2 &\leq p_2. \nonumber    
  \end{align}
  Using Eq.~\eqref{eq:bundle-buy-sum} we can substitute
  $P_2=p_1+p_2-P_1$ and arrive at the following equivalent condition:
  $B$ occurs if and only if there exists a $P_1$  such that 
  \begin{align*}
    p_2-V_2 &\leq P_1-p_1 \leq V_1-p_1,\\
    0 &\leq P_1-p_1 \leq \eps.
  \end{align*}
  In this form it is apparent that $B$ occurs if and only if
  \begin{align*}
    V_1+V_2 &\geq p_1+p_2,\\
    V_1 &\geq p_1,\\
    V_2 &\geq p_2-\eps.
  \end{align*}

  We now partition our probability space into the disjoint events
  $\{A_1^\eps,A_2^\eps,A_3^\eps,A_4^\eps,A_5^\eps\}$ (see
  Fig.~\ref{fig:events}), where the $\eps$ in the superscripts denotes
  the fact that these events depend on $\eps$. We compare
  $U^s=U^s(p_1,p_2)$ and $U^b=U^b(p_1+\eps,p_2,p_1+p_2)$ in each
  event.

  \begin{figure}[htp]
    \centering
    \includegraphics{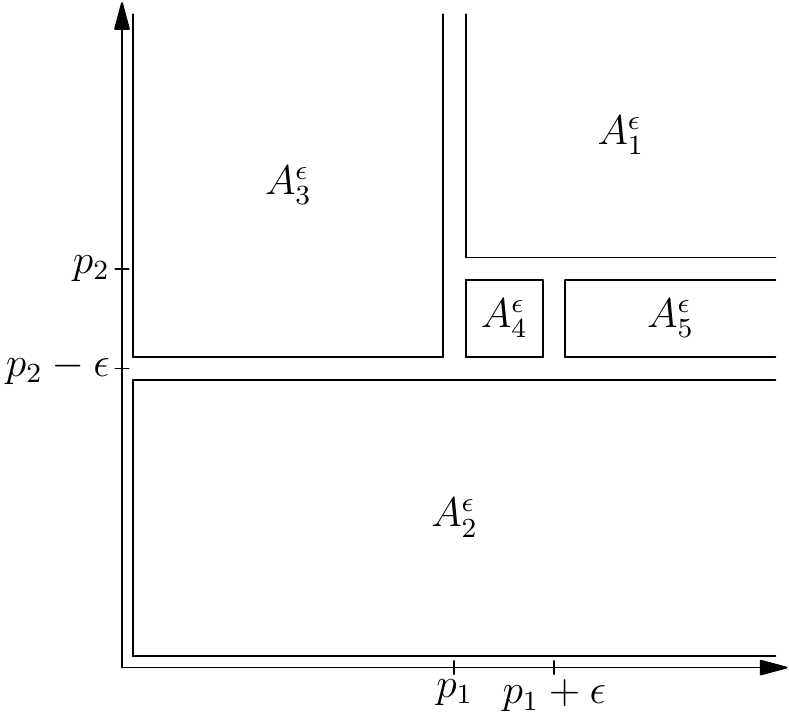} 
    \caption{Disjoint union of ${\R^+}^2$ into $\{A_i^\eps\}_{i=1}^5$.}
    \label{fig:events}
   \end{figure}

  \begin{enumerate}
  \item Let $A_1^\eps$ be the event that $(V_1,V_2) \in [p_1,\infty) \times
    [p_2,\infty)$. Then in $A_1$, in the single customer strategy both
    customers buy an item for a total of $p_1+p_2$, and in the
    bundling strategy the customers buy the bundle for
    $p_1+p_2$. Hence
    \begin{align*}
      U^s\ind{A_1^\eps} = U^b \ind{A_1^\eps}
    \end{align*}
    and
    \begin{align}
      \label{eq:region-A-1}
      \E{U^b\ind{A_1^\eps}} = \E{U^s\ind{A_1^\eps}}.
    \end{align}

  \item Let $A_2^\eps$ be the event that $(V_1,V_2) \in [0,\infty)
    \times [0,p_2-\eps)$. In this region the bundle is not bought, and
    neither does customer 2 buy an item on their own, in either
    strategies. Hence in this region customer 1 buys the item iff $V_1
    \geq p_1+\eps$ in the bundling strategy. Since $V_1$ and $V_2$ are
    independent then the (expected) utility for the seller in the
    bundling strategy is identical to what it would be when offering a
    single item to customer 1 for $p_1+\eps$. Since, by
    Eq.~\eqref{eq:u-s-maximum}, this expected utility is maximized
    when the price is $p_1$ (as is done in the single item strategy),
    then
    \begin{align*}
      \CondE{U^b}{A_2^\eps} \geq \CondE{U^s}{A_2^\eps} - C_1\eps^2.
    \end{align*}
    and
    \begin{align}
      \label{eq:region-A-2}
      \E{U^b\ind{A_2^\eps}} \geq \E{U^s\ind{A_2^\eps}} -
      C_1\eps^2\P{A_2^\eps}.
    \end{align}

  \item Let $A_3^\eps$ be the event that $(V_1,V_2) \in [0,p_1) \times
    [p_2-\eps,\infty)$. In this region the bundle is not bought, and
    neither does customer 1 buy an item on their own, in either
    strategies. Customer 2, however, buys in both
    strategies. Therefore the seller's utility is identical in this
    region:
    \begin{align*}
      U^s\ind{A_3^\eps} = U^b \ind{A_3^\eps}
    \end{align*}
    and
    \begin{align}
      \label{eq:region-A-3}
      \E{U^b\ind{A_3^\eps}} = \E{U^s\ind{A_3^\eps}}.
    \end{align}

  \item Let $A_4^\eps$ be the event that $(V_1,V_2) \in
    [p_1,p_1+\eps)\times[p_2-\eps,p_2)$. In this case we note that
    \begin{align*}
      \E{U^b\ind{A_4^\eps}} = \CondE{U^b}{A_4^\eps}\P{A_4^\eps},
    \end{align*}
    and
    \begin{align*}
      \E{U^s\ind{A_4^\eps}} = \CondE{U^s}{A_4^\eps}\P{A_4^\eps}.
    \end{align*}
    Now, since we assumed that the distribution of $(V_1,V_2)$ is
    non-atomic and since both $U^b$ and $U^s$ are bounded then there
    exists a constant $C$ such that for $\eps$ small enough it holds
    that $\P{A_4^\eps} < C\eps^2$, and so there exists a constant
    $C_2$ such that
    \begin{align}
      \label{eq:region-A-4}
      \E{U^b\ind{A_4^\eps}} \geq \E{U^s\ind{A_4^\eps}} -
      C_2\eps^2
    \end{align}
    for $\eps$ small enough.

  \item Finally, let $A_5^\eps$ be the event that $(V_1,V_2) \in
    [p_1+\eps,\infty)\times[p_2-\eps,p_2)$. Here in the single customer
    strategy customer 1 buys an item for $p_1$ and customer 2 does not
    buy. In the bundling strategy the customers purchase a bundle for
    $p_1+p_2$. Hence
    \begin{align*}
      \E{U^s\ind{A_5^\eps}} = p_1\P{A_5^\eps}
    \end{align*}
    and
    \begin{align*}
      \E{U^b\ind{A_5^\eps}} = (p_1+p_2)\P{A_5^\eps}.
    \end{align*}
    Since the distribution of $(V_1,V_2)$ is supported on $[0,M]^2$,
    and since $p_2 > 0$ (see note at the end of
    Section~\ref{sec:single-model}), then there exists a constant
    $C_3$ such that for $\eps$ small enough $\P{A_5^\eps} >
    C_3\eps$. Hence
    \begin{align}
      \label{eq:region-A-5}
      \E{U^b\ind{A_5^\eps}} \geq \E{U^s\ind{A_5^\eps}} +p_2C_3\eps.
    \end{align}
    for $\eps$ small enough.  
  \end{enumerate}

  Since the events $\{A_i^\eps\}$ are disjoint and since
  $\P{\cup_i A_i}=1$ then
  \begin{align*}
    u^b=\E{U^b}=\sum_{i=1}^5\E{U^b\ind{A_i^\eps}},
  \end{align*}
  with a similar expression for $u^s$. Therefore, as a conclusion of
  Eqs.~\eqref{eq:region-A-1}, \eqref{eq:region-A-2},
  \eqref{eq:region-A-3}, \eqref{eq:region-A-4} and
  \eqref{eq:region-A-5} we have that for $\eps$ small enough
  \begin{align*}
    u^b(p_1+\eps,p_2,p_1+p_2) \geq
    u^s(p_1,p_2)-(C_1\P{A_2^\eps}+C_2)\eps^2+p_2C_3\eps,
  \end{align*}
  and therefore for $\eps$ small enough
  \begin{align*}
    u^b(p_1+\eps,p_2,p_1+p_2) > u^s(p_1,p_2).
  \end{align*}
  
\end{proof}

\begin{proof}[Proof of Theorem~\ref{thm:more-bundling-better}]
  Consider the bundling strategy with individual prices $a_i=\infty$
  for all $i \in [n]$ (i.e., no single item sales) and $b =
  \mu- 2M\sqrt{n\log n}$. In this case the customers will either buy
  the bundle if its cost is less than the sum of their valuations, and
  buy nothing at all otherwise. Denote the sum of their valuations by
  $V=\sum_{i=1}^nV_i$.

  Since $V_i \in [0,M]$ then $\E{V_i^2}\leq M^2$. Hence by a version
  of Bernstein's
  inequality~\cite{bernstein1924modification}\footnote{We use the
    following version of Bernstein's inequality: Let $X_1,\ldots,X_n$
    be independent random variables such that $\E{X_i}=0$ and
    $|X_i|<M$ for all $i$. Then for any $t>0$ it holds that
    \begin{align*}
      \P{\sum_iX_i >t} \leq \exp\left(-\frac{t^2/2}{\sum_i\E{X_i^2}+Mt/3}\right).
    \end{align*}
 }   
    we have that,
  \begin{align*}
    \P{V < b} &\leq
    \exp\left(-\frac{4M^2n\log n}{2M^2n+2M^2\sqrt{n\log n}/3}\right)
  \end{align*}
  and hence
  \begin{align*}
    \P{V < b} &\leq \frac{1}{n}.
  \end{align*}
  
  Since the customers buy the bundle when $V \geq b$ then the seller's
  expected utility equals $\P{V \geq b}b$ and it holds that 
  \begin{align*}
    \P{V \geq b}b &\geq
    \left(1-\frac{1}{n}\right)\left(\mu-2M\sqrt{n\log n}\right).
  \end{align*}
  Since $f_i > \delta$ in the interval $[0,M]$ then
  $\E{V_i}>M\delta/2$ and $\mu > nM\delta/2$, and so it holds that
  \begin{align*}
    \P{V \geq b}b
    &\geq \left(1-\frac{1}{n}\right)\left(1-\frac{4}{\delta}\sqrt{\frac{\log n}{n}}\right)\mu\\
    &\geq \left(1-\frac{4}{\delta}\sqrt{\frac{\log n}{n}}-O\left(\frac{1}{n}\right)\right)\mu,
  \end{align*}
  where the second inequality follows from the fact that $\mu \geq
  n\eps$.

  Since the optimal strategy yields at least as much utility to the
  seller as this one, then 
  \begin{align*}
    \max_{\bar{a},b}u(\bar{a},b) \geq \left(1-\frac{4}{\delta}\sqrt{\frac{\log n}{n}}-O\left(\frac{1}{n}\right)\right)\mu.
  \end{align*}
\end{proof}

\section{Conclusion}
We showed how sellers may maximize profits by offering bundles of
items to rational groups of customers.  Our work suggest a number
of future research directions we wish to mention.

\subsection{Optimal auctions and optimal profit}
Our results show that it suffices to bundle pairs of customers to
increase profits under mild conditions, and that if the customers are
bundled in large groups it is possible to extract profit which
approaches the theoretical bound, as the group size increase.  For
example - assume that there are $N$ individuals - $N/2$ are paired
into rational groups, another $N/3$ are partitioned into rational
groups of size $3$, and the rest $N/6$ are partitioned into rational
groups of size $6$.  Assuming that all valuations are drawn
i.i.d.\ from the same distribution $F$ - what is the optimal auction
and by how much is it better than the single item auction?

\subsection{Overlapping Rational Groups and Social Networks}
The problem presented in the previous subsection can be generalized
further to a situation where an individual may belong to more than one
rational group: for example an individual may belong to a family and
to a small start-up company. The two groups are rational and are
offered different bundles. Understanding the optimal auction in this
setup and its relationship to the social network structure is, in our
opinion, an interesting open problem.

\bibliographystyle{abbrv} \bibliography{bundling}

\end{document}